\title{Improved approximation algorithms for some capacitated {\em k} edge connectivity problems}
\titlerunning{Improved approximation algorithms for some capacitated {\em k} edge connectivity problems}
\author{Zeev Nutov}{The Open University of Israel}{nutov@openu.ac.il}
{https://orcid.org/0000-0002-6629-3243}{}
\authorrunning{Zeev Nutov}
\begin{document}

\maketitle
\newcommand {\ignore} [1] {}

% Sets
\newcommand{\sem}    {\setminus}
\newcommand{\subs}   {\subseteq}
\newcommand{\empt}  {\emptyset}

\newcommand{\f}   {\frac}

% Bisets
\def\A {\mathbb{A}}
% \newcommand{\p}   {\partial}

% Greek letters
\def\al   {\alpha}
\def\be {\beta}
\def\ga {\gamma}
\def\de   {\delta}
\def\eps {\epsilon}
\def\la {\lambda}

% Calligraphis
\def\CC  {{\cal C}}
\def\FF  {{\cal F}}
\def\LL  {{\cal L}}
\def\PP  {{\cal P}}

%Problems 

\def\csnd  {{\sc Cap-SND}}

\def\sfec     {{\sc Set Family Edge Cover}}
\def\nmcc  {{\sc Near Min-Cuts Cover}}

\def\ckecs    {{\sc Cap-$k$-ECS}}                            % {\sc Capacitated $k$-Edge Connected Subgraph}
\def\ckecsa {{\sc Cap-$k$-ECS Augmentation}} % {\sc Capacitated $k$-Edge Connected Subgraph Augmentation}

\def\fgc        {{\sc FGC}}                                           % {\sc Flexible Graph Connectivity}
\def\fgca      {{\sc FGC Augmentation}}                  % {\sc Flexible Graph Connectivity Augmentation}

\def\stfgc   {{\sc $st$-FC}}                                  % {\sc $st$-Flexible Connectivity}
\def\stfgca  {{\sc $st$-FC Augmentation}}         % {\sc $st$-Flexible Connectivity}

%%%%%%%
\keywords{capacitated network design, flexible connectivity, near minimum cuts}
%%%%%%%

\begin{abstract}
We consider the following two variants of the {\sc Capacitated $k$-Edge Connected Subgraph} ({\ckecs}) problem.
% given a graph $G=(V,E)$ with edge costs $c_e$ and edge capacities $u_e$, and an integer $k$,
% find a min-cost spanning subgraph $H$ of $G$ that has no cut of capacity $<k$.
\begin{itemize}
\item
{\nmcc}: 
Given a graph $G=(V,E)$ with edge costs and $E_0 \subs E$, find a min-cost edge set $J \subs E \sem E_0$ 
that covers all cuts with at most $k-1$ edges of  the graph $G_0=(V,E_0)$. 
We obtain approximation ratio $k-\la_0+1+\eps$, improving the ratio $2\min\{k-\la_0,8\}$ of \cite{BCGI} for $k-\la_0 \leq 14$,
where $\la_0$ is the edge connectivity of $G_0$.
\item
{\sc $(k,q)$-Flexible Graph Connectivity} ($(k,q)$-{\fgc}): 
Given a graph $G=(V,E)$ with edge costs and a set $U \subs E$ of ''unsafe'' edges and integers $k,q$, 
find a min-cost subgraph $H$ of $G$ such that every cut of $H$ has at least $k$ safe edges or at least $k+q$ edges. 
We show that 
$(k,1)$-{\fgc} admits approximation ratio $3.5+\eps$ if $k$ is odd (improving the ratio $4$ of \cite{BCHI}), and that 
$(k,2)$-{\fgc} admits approximation ratio $6$ if $k$ is even and $7+\eps$ if $k$ is odd (improving the ratio $20$ of \cite{BCGI}).
\end{itemize}
\end{abstract}

%%%%%%%%%%%%%%%%%%%%%%
\section{Introduction} \label{s:intro}
%%%%%%%%%%%%%%%%%%%%%%

Let $G=(V,E)$ be  graph. 
For an edge subset or a subgraph $J$ of $G$ and $S \subs V$ let $\de_J(S)$ denote 
the set of edges in $J$ with one end in $S$ and the other in $V \sem S$, 
and let $d_J(S)=|\de_J(S)|$ be their number;
we say that {\bf $J$ covers $S$} if $d_J(S) \geq 1$.
We say that $J$ covers a set family $\FF$ if $J$ covers every set $S \in \FF$.
For a proper node subset $S$ of a graph $H$, the {\bf cut} defined by $S$ is $\de_H(S)$
(it is known that if $H$ is connected then distinct sets define distinct cuts);  
an {\bf $\ell$-cut} means that $d_H(S)=\ell$. 
We consider several variants of the following problem.  

\begin{center}
\fbox{\begin{minipage}{0.96\textwidth} \noindent
\underline{{\sc Capacitated $k$-Edge Connected Subgraph} ({\ckecs})} \\ 
{\em Input:} \ \ A graph $G=(V,E)$ with edge costs $\{c_e:e \in E\}$ and edge capacities $\{u_e:e \in E\}$, 
\hphantom {\em Input: \ } and an integer $k$. (All input  numbers are assumed to be non-negative integers). \\ 
{\em Output:} A mini-cost edge set $J \subs E$ such that $u(\de_J(S)) \geq k$ for every $\empt \neq S \subset V$.
\end{minipage}} \end{center}

In the augmentation version {\ckecsa} of {\ckecs} we are also given a subgraph $G_0=(V,E_0)$ of $G$ 
of cost zero that has capacitated edge connectivity $\la_0=\min\{u(\de_{E_0}(S)):\empt \neq S \subsetneq V\}$ close to $k$. 
The goal is to compute a min-cost edge set $J \subs E \sem E_0$ such that $G_0 \cup J$ has capaciatated connectivity $k$. 
For $k=\la_0-1$ this problem is equivalent to the ordinary {\sc $k$-Edge Connectivity Augmentation} problem, where all capacities are $1$. 
We consider a specific version of {\ckecsa} when % every edge in $E_0$ has cost zero and capacity $1$ and 
every edge in $E \sem E_0$ has capacity $\geq k -\la_0$, and w.l.o.g. assume that edges in $E_0$ have capacity $1$.
Consequently, this problem can also be formulated as follows. 

\begin{center}
\fbox{\begin{minipage}{0.96\textwidth} \noindent
\underline{\nmcc} \\ 
{\em Input:} \ \ A graph $G=(V,E)$, $E_0 \subs E$ with edge costs $\{c_e:e \in E \sem E_0\}$, and an integer $k$. \\ 
{\em Output:} A min-cost edge set $J \subs E \sem E_0$ that covers the family $\{\empt \neq S \subsetneq V: d_{E_0}(S) < k\}$.
\end{minipage}} \end{center}

Using the $2$-approximation algorithm of \cite{GGPS} for increasing the edge connectivity by $1$, 
one can easily obtain ratio $2(k-\la_0)$ for this problem. 
Using a more recent $(1.5+\eps)$-approximation algorithm of \cite{TZ,TZ2} gives ratio $(1.5+\eps)(k-\la_0)$.
Bansal, Cheriyan, Grout, and Ibrahimpur \cite{BCGI} showed that {\nmcc} admits a constant approximation ratio $16$. 
We improve over these ratios for the cases when $\la_0$ is close to $k$. 

\begin{theorem} \label{t:la}
{\nmcc} admits the following approximation ratios:
\begin{itemize}
\item
$k-\la_0$                   \hspace*{1.55cm} if $\la_0,k$ are both even. 
\item
$k-\la_0+1/2+\eps$ \                                if $\la_0,k$ have distinct parity.
\item
$k-\la_0+1+\eps$    \hspace*{0.38cm}  if $\la_0,k$ are both odd.
\end{itemize}
\end{theorem}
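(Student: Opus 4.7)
The plan is a level-by-level augmentation scheme that processes cut-sizes $\ell = \la_0, \la_0+1, \ldots, k-1$ in increasing order and exploits the parity of $\ell$ to save a factor of two on about half of the levels.

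At stage $\ell$ I would maintain a partial solution $J_{<\ell} \subs E \sem E_0$ covering every cut of $G_0$ of size at most $\ell-1$, and add an edge set $J_\ell$ covering every set in $\FF_\ell = \{S : d_{E_0}(S) = \ell\}$ not already covered by $J_{<\ell}$. The union $J = \bigcup_\ell J_\ell$ is then feasible. Submodularity of $d_{E_0}(\cdot)$ implies that the residual family at stage $\ell$ is uncrossable: if $A,B \in \FF_\ell$ are both uncovered by $J_{<\ell}$, then at least one of the uncrossed pairs consists of two sets of $d_{E_0}$-value at most $\ell$ lying in some $\FF_i$ with $i \le \ell$ and still uncovered. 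Consequently the primal--dual algorithm of Williamson--Goemans--Mihail--Vazirani gives a factor-$2$ approximation at each stage, which already yields the naive total ratio $2(k-\la_0)$.

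The parity gain comes from handling \emph{even} values of $\ell$ essentially for free by bundling each with the subsequent odd level. Cuts in $\FF_\ell$ for even $\ell$ form a cactus-like family in $G_0$ (following the Dinits--Karzanov--Lomonosov representation of minimum cuts, applied after an appropriate scaling of $G_0$); covering a cactus family reduces to tree edge-cover, which has integrality gap one, so the cost can be charged to the factor-$2$ cost of the immediately following odd-level stage. Summing over the $(k-\la_0)/2$ odd levels gives the even/even ratio $k-\la_0$. The remaining two cases arise from how the boundary stage(s) where the even/odd pairing is interrupted are handled: replacing WGMV by the $(1.5+\eps)$-approximation of Traub--Zenklusen for a single $+1$ augmentation adds $1/2+\eps$ per uncoupled odd stage, giving $k-\la_0+1/2+\eps$ in the distinct-parity case (one uncoupled stage) and $k-\la_0+1+\eps$ in the both-odd case (two uncoupled stages).

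The main obstacle is making the ``integrality gap one'' claim for even levels rigorous, and in particular showing that this gain survives after the earlier stages have altered the residual family. The relevant graph at stage $\ell$ is $G_0 \cup J_{<\ell}$, not $G_0$, and the cactus structure of $\FF_\ell$ refers to $G_0$; one must argue that the structure of size-$\ell$ cuts of $G_0$ remains as a sub-family of the stage-$\ell$ residual, and that the cost already incurred on earlier stages can be amortized against the following odd-level stage rather than paid for a second time.
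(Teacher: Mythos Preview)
Your high-level architecture---pair an even level with the next odd level so that the two together cost $2\cdot{\sf opt}$, and handle an unpaired boundary level with the $(1.5+\eps)$ algorithm of Traub--Zenklusen---matches the paper's. But the mechanism you propose for the pairing does not work, and the obstacle you flag at the end is real and unresolved.

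First, the claim that covering a cactus family ``reduces to tree edge-cover, which has integrality gap one'' is false: tree (equivalently cactus) augmentation has integrality gap strictly greater than $1$ for the natural cut LP, and the best known approximation is $1.5+\eps$, not $1$. So even in the ideal case you cannot get the even level for free by this route.

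Second, the Dinits--Karzanov--Lomonosov cactus represents the \emph{minimum} cuts of a graph. Your family $\FF_\ell=\{S:d_{E_0}(S)=\ell\}$ for $\ell>\la_0$ is not a minimum-cut family of $G_0$, and no ``appropriate scaling'' of $G_0$ turns it into one---scaling does not change which cuts are minimum. So the cactus structure simply does not apply to $\FF_\ell$ in $G_0$.

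The paper's mechanism is different. It assigns the already-chosen edges $F$ capacity $k-\la_0$ and works in the graph $H=G_0\cup F$; once the first $\la-\la_0$ levels are covered, $H$ is genuinely $\la$-edge-connected, so level $\la$ really is the minimum-cut level of $H$. The key structural fact (established by an explicit case analysis of how two $(\la+1)$-cuts can cross in a $\la$-connected graph) is that when $\la$ is even, the combined family $\{S:d_H(S)\in\{\la,\la+1\}\}$ is \emph{uncrossable}. Hence one call to the $2$-approximation for uncrossable families covers levels $\la$ and $\la+1$ together at cost at most $2\cdot{\sf opt}$. The even level is not handled for free via integrality; rather, the two consecutive levels form a single uncrossable instance. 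That uncrossability lemma for $\{\la,\la+1\}$-cuts with $\la$ even is the missing ingredient in your plan.
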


\noindent
For example, when $k-\la_0 =8$ and $\la_0,k-1$ are both even, our ratio is $8$ while that of \cite{BCGI} is $16$. 
On the other hand, when $k-\la_0 \geq16$ the ratio $16$ of \cite{BCGI} is better than our ratio. 

\medskip

Recently, Adjiashvili, Hommelsheim and M\"{u}hlenthaler \cite{AHM} 
defined a new interesting version of {\ckecs}, called $k$-{\sc Flexible Graph Connectivity} ($k$-{\fgc}).
Suppose that there is a subset $U \subs E$ of ``unsafe'' edges, and we want to find 
the cheapest spanning subgraph $H$ that will be $k$-connected even if some unsafe edge is removed. 
This means that for any proper subset $S$ of $V$ we should have $d_{H \sem U}(S) \geq k$ or $d_H(S) \geq k+1$.
It is not hard to see that $k$-{\fgc} is equivalent to {\sc Cap-$(k(k+1))$-ECS} with $u(e)=k$ if $e \in U$ and $u(e)=k+1$ otherwise. 
Furthermore, for $U=\empt$ we get the ordinary {\sc $k$-Edge-Connected Subgraph} problem. 

Boyd, Cheriyan, Haddadan, Ibrahimpur \cite{BCHI} suggested a generalization of {\fgc} 
when up to $q$ unsafe edges may fail.
Let us say that a subgraph $H=(V,J)$ of $G$ is {\bf $(k,q)$-flex-connected} if
any cut $\de_H(S)$ of $H$ has at least $k$ safe edges or at least $k+q$ (safe and unsafe) edges, namely, if  
$d_{H \sem U}(S) \geq k$ or $d_H(S) \geq k+q$ for all $\empt \neq S \subsetneq V$.
Observing that $d_{H \sem U}(S) =d_H(S)-d_{H \cap U}(S)$,  
we get that $H$ is $(k,q)$-flex-connected if and only if
\begin{equation} \label{e:flex}
d_H(S) \geq k+\min\{d_{H \cap U}(S),q\} \ \ \ \ \ \forall \empt \neq S \subsetneq V
\end{equation} 
Summarizing, we get the following problem.

\begin{center}
\fbox{\begin{minipage}{0.96\textwidth} \noindent
\underline{{\sc $(k,q)$-Flexible Graph Connectivity} ($(k,q)$-{\fgc})} \\ 
{\em Input:} \ \ A graph $G=(V,E)$ with edge costs $\{c_e:e \in E\}$, $U \subs E$, and integers $k,q \geq 0$. \\  
{\em Output:} A min-cost subgraph $H$ of $G$ such that (\ref{e:flex}) holds.
\end{minipage}} \end{center}

As was mentioned, $(k,1)$-{\fgc} reduces to {\sc Cap-$(k(k+1))$-ECS} with $u(e)=k$ if $e \in U$ and $u(e)=k+1$ otherwise. 
It is also known (see \cite{CJ}) that $(1,q)$-{\fgc} reduces to {\sc Cap-$(q+1)$-ECS} with $u(e)=1$ if $e \in U$ and $u(e)=q+1$ otherwise. 
However, no reduction to {\sc Cap-$\ell$-ECS}  is known for other values of $k,q$.

The best ratio for $(k,1)$-{\fgc} was $4$ for arbitrary costs and $16/11$ for unit costs \cite{BCHI}.
The best ratio for $(k,2)$-{\fgc} was $2k+4$ for $k \leq 7$ \cite{CJ} and $20$ for $k \geq 8$ \cite{BCGI}. 
We improve this as follows, and also give a simple approximation algorithm for $(k,q)$-{\fgc} with unit costs. 

\begin{theorem} \label{t:fgc}
\begin{enumerate}[(i)]
\item
$(k,1)$-{\fgc} admits approximation ratio $3.5+\eps$ if $k$ is odd.
\item
$(k,2)$-{\fgc} admits approximation ratio $7+\eps$ if $k$ is odd and $6$ if $k$ is even.
\item
For unit costs, $(k,q)$-{\fgc} admits approximation ratio $\al+\f{2q}{k}$, where $\al$ is the best known ratio for the 
{\sc Min-Size $k$-Edge-Connected Subgraph} problem.
\end{enumerate}
\end{theorem}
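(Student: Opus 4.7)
The overall plan is a two-phase ``base plus augmentation'' scheme: compute an approximate $k$-edge-connected subgraph $H_1$ of $G$, then augment $H_1$ with a small extra set $H_2$ so that $H_1 \cup H_2$ is $(k,q)$-flex-connected. The key feasibility observation is that every $(k,q)$-flex-connected subgraph of $G$ is $k$-edge-connected, so the minimum cost of a $k$-ECS of $G$ is at most $\mathrm{opt}$, and the base phase costs at most its approximation ratio times $\mathrm{opt}$.

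For part (iii) (unit costs), take $H_1$ to be an $\al$-approximate minimum-size $k$-ECS of $G$, so $|H_1|\le\al\cdot|\mathrm{opt}|$. Let $H_2$ be the Nagamochi--Ibaraki sparse $q$-certificate of the residual graph $G \sem H_1$: a subgraph with $|H_2|\le q(n{-}1)$ satisfying $d_{H_2}(S)\ge\min\{d_{G\sem H_1}(S),q\}$ for every cut $\de(S)$, and in particular containing every residual edge across any cut with fewer than $q$ residual edges. On any cut either $d_{G\sem H_1}(S) \ge q$, in which case $H_2$ contributes $q$ edges edge-disjoint from $H_1$ and $d_{H_1\cup H_2}(S)\ge k+q$; or $d_{G\sem H_1}(S) < q$, in which case $\de_{H_1\cup H_2}(S) = \de_G(S)$ and the $(k,q)$-flex condition is inherited from $G$ (which is flex-connected by feasibility of the instance). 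Since $\mathrm{opt}$ is $k$-edge-connected, $|\mathrm{opt}|\ge kn/2$, hence $|H_2|\le q(n{-}1) \le (2q/k)|\mathrm{opt}|$, and the total ratio is $\al + 2q/k$.

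For parts (i) and (ii) (general costs), the base $H_1$ is a $(1.5+\eps)$-approximate min-cost $k$-ECS of $G$ via \cite{TZ,TZ2}, of cost at most $(1.5+\eps)\mathrm{opt}$. The augmentation phase covers the ``bad'' cuts $S$ with $d_{H_1 \sem U}(S) < k$, raising them to $\ge k+q$ total edges; this is a near-min-cuts cover instance to which Theorem~\ref{t:la} or the $2$-approximation of \cite{GGPS} for connectivity-by-one augmentation applies. For $(k,1)$-{\fgc} with $k$ odd one cover round suffices, giving the total $(1.5+\eps)+2 = 3.5+\eps$. For $(k,2)$-{\fgc} one invokes Theorem~\ref{t:la} with $\lambda_0 = k$ and target $k+2$, using the same-parity clause according to the parity of $k$; summing the base and the feasibility-scaled augmentation yields $6$ for even $k$ and $7+\eps$ for odd $k$.

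The main obstacle will be bounding the augmentation optimum $\mathrm{opt}_{aug}$ by $\mathrm{opt}$ in parts (i), (ii). The subtlety is that $\mathrm{opt}\cup H_1$ need not be $(k{+}q)$-edge-connected: a cut saturated by exactly $k$ safe edges of $\mathrm{opt}$, which satisfies flex via its ``$\ge k$ safe'' clause, may share all its edges with $H_1$ and remain a $k$-cut. The remedy is that such a cut already has $d_{H_1\sem U}(S) = k$, so it is not bad and needs no augmentation. On bad cuts $S$ with $d_{H_1\sem U}(S) < k$, the flex property of $\mathrm{opt}$ combined with the $k$-ECS property of $H_1$ forces $\mathrm{opt}\sem H_1$ to contain enough crossing edges to provide the augmentation, giving $\mathrm{opt}_{aug}\le\mathrm{opt}$ for $q=1$ (with an additional feasibility factor for $q=2$). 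Pairing this feasibility bound with the parity-sensitive clauses of Theorem~\ref{t:la} is the delicate bookkeeping that pins down the improved constants $3.5+\eps$, $6$, and $7+\eps$.
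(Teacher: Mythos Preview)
Your argument for part~(iii) via the Nagamochi--Ibaraki sparse $q$-certificate of $G\sem H_1$ is correct and genuinely different from the paper's route. The paper instead runs $q$ rounds: at round $\ell$ it takes an inclusion-minimal cover $J_\ell$ of $\FF_\ell(H)=\{S:d_H(S)=k+\ell-1,\ d_U(S)\ge\ell\}$, notes that any minimal cover of a set family is a forest, and bounds $|J_\ell|\le n-1$. Both give $\al+2q/k$; your version is arguably cleaner since it avoids appealing to the polynomial enumerability of $\FF_\ell(H)$.

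Parts~(i) and~(ii), however, do not go through. First, there is no $(1.5+\eps)$-approximation for weighted min-cost $k$-ECS: \cite{TZ,TZ2} treat connectivity \emph{augmentation by one} (covering the min-cuts of a given graph), not building a $k$-ECS from scratch, where the best known ratio is~$2$ \cite{K}. With the correct base factor~$2$ and your augmentation bound of~$2$ you would only recover the previous ratio~$4$ for $(k,1)$-{\fgc}. Second, you cannot invoke Theorem~\ref{t:la} for the augmentation in part~(ii): Theorem~\ref{t:la} covers \emph{all} cuts of $H_1$ of size $<k+q$, and its guarantee is relative to the optimum of \emph{that} problem, which (as you yourself observe) is not bounded by ${\sf opt}$ --- a cut carrying $k$ safe edges in ${\sf opt}$ may remain a $k$-cut in $H_1\cup{\sf opt}$. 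Your remedy, that such cuts are ``not bad'', is a correct statement about which cuts need covering, but it does not salvage the black-box call to Theorem~\ref{t:la}, whose algorithm and analysis do not restrict to the bad subfamily.

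What the paper actually does for (i)--(ii) is prove new uncrossing properties of the bad-cut families $\FF_\ell(H)$ themselves (Lemmas~\ref{l:C1}--\ref{l:odd}): for $k$ odd $\FF_1(H)$ is laminar (so \cite{TZ} gives $1.5+\eps$); for $k$ even both $\FF_1(H)$ and $\FF_2(H)$ are uncrossable (so \cite{GGPS} gives~$2$ each); for $k$ odd $\FF_2(H)$ decomposes into an uncrossable part and a symmetric proper crossing part (ratios $2$ and $1.5+\eps$). Since ${\sf opt}$ is $k$-edge-connected it covers every $\FF_\ell(H)$, so each round costs at most its ratio times ${\sf opt}$. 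Summing with the base~$2$ yields $2+(1.5+\eps)=3.5+\eps$, $2+2+2=6$, and $2+(1.5+\eps)+2+(1.5+\eps)=7+\eps$. These structural lemmas on $\FF_\ell(H)$ are the missing ingredient in your plan.
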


We summarize the best known approximation ratios for $(q,k)$-{\fgc} in Table~\ref{tbl:fgc}. 
\begin{table} [htbp] 
\begin{center}
\begin{tabular}{|c|l|l|}  \hline  
{\boldmath $(k,q)$}  & {\bf previous}      & {\bf this paper}              
\\\hline \hline
$(k,1)$                      & $4$ \cite{BCHI}                                                                                  & $3.5+\eps$ for $k$ odd    \\
                                  & $16/11$ for unit costs \cite{BCHI}                                                    & $\al+2/k$ for unit costs 
\\\hline 
$(k,2)$                      & $2k+4$ \cite{CJ} for $k \leq 7$, $20$ for $k \geq 8$ \cite{BCGI} & $7+\eps$ for $k$ odd, $6$ for $k$ even
\\\hline 
$(k,3)$                      & $4k+4$ \cite{CJ}                                                                               &  
\\\hline
$(k,4)$                      & $6k+4$ for $k$ even  \cite{CJ}                                                        &        
\\\hline \hline
$(1,q)$                      & $q+1$ \cite{BCHI}                                                                             &  
\\\hline
$(2,q)$                      & $2q+2$    \cite{CJ}                                                                           & 
\\\hline
$(k,q)$                      & $O(q \log n)$   \cite{BCHI}                                                               &  $\al+\f{2q}{k}$ for unit costs
\\\hline
\end{tabular}
\end{center}
\caption{Known approximation ratios for $(k,q)$-{\fgc}.
Here $\al$ is the best known ratio for the 
{\sc Min-Size $k$-Edge-Connected Subgraph} problem; currently 
$\al=1+{1}/(2k)+O({1}/{k^2})$ for graphs \cite{GG}, and for multigraphs 
$\al=1+\f{3}{k}$ for $k$ odd and $\al=1+\f{2}{k}$ for $k$ even \cite{GGTW}, and $\al=1.326$ for $k=2$~\cite{GGJ}.}
\label{tbl:fgc}
\end{table}

In the next section \ref{s:pre} we give some uncrossing properties of minimum and minimum+1 cuts of a graph needed 
for the proofs of Theorems \ref{t:la} and \ref{t:fgc}, that are proved in Sections \ref{s:la} and \ref{s:fgc}, respectively.

%%%%%%%%%%%%%%%%%%%%%%%%%%%%%%
\section{Uncrossing properties of minimum and minimum+1 cuts} \label{s:pre}
%%%%%%%%%%%%%%%%%%%%%%%%%%%%%%

In this section we give some ``uncrossing'' properties of near minimum cuts 
needed for the proofs of Theorems \ref{t:la} and \ref{t:fgc}. 
Let $H=(V,J)$ be a $\la$-connected graph and let $A,B \subs V$ such 
that all four sets $C_1=A \cap B, C_2=A \sem B, C_3=V \sem (A \cup B), C_4=B \sem A$ are non-empty.
Shrinking each of these sets into a single node  results in a graph on $4$ nodes, 
and we will further replace parallel edges by a single capacitated edge; see Fig.~\ref{f:l-l+1}(a),
where near each edge is written its (currently unknown) capacity.
We will call this graph the {\bf square} of $A,B$, the sets that correspond to shrunken nodes are {\bf corner sets}, 
edges $C_1C_2,C_2C_3,C_3C_4,C_4C_1$ are {\bf side edges} (that have capacity $z,y,w,x$, respectively),
and $C_2C_4,C_1C_3$ are {\bf diagonal edges} (that have capacity $a,b$, respectively).
We will also use abbreviated notation $d_i=d_H(C_i)$, $d_H(A)=d_A$, $d_H(B)=d_B$.
% For $S \subs V$ we use the abbreviated notation $d(S)=d_H(S)$. 
The following equalities are known, and can be easily verified 
by counting the contribution of each edge to both sides of each equality:
\begin{eqnarray}
d_1+d_3 & = & d_A+d_B-2a  \label{e:a} \ , \\
d_2+d_4 & = & d_A+d_B-2b  \label{e:b} \ .
\end{eqnarray}

Since the function $d(\cdot)$ is {\bf symmetric}, namely, $d(S)=d(V \sem S)$ for all $S \subs V$, we will assume w.l.o.g that:
\begin{itemize}
\item
$d_1 \leq d_i$ for $i=2,3,4$.
\item
$d_2 \leq d_4$.
\item
If $d_1=d_2$ then $a \geq b$. 
\end{itemize}

\begin{figure} \centering \includegraphics{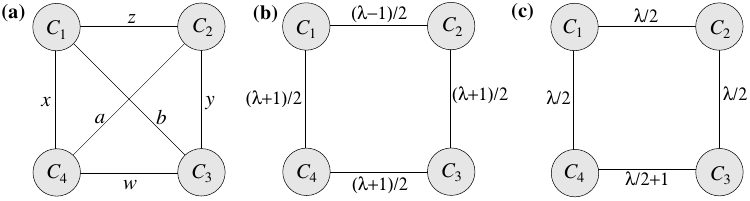}
\caption{Illustration to Lemmas \ref{l:sol} and \ref{l:l-l+1}.}
\label{f:l-l+1} \end{figure}

\begin{lemma} \label{l:sol}
Let $\al=d_A+d_1-d_2$. Then $\al$ is even and 
\begin{eqnarray*}
x & = & \al/2-b                  \\
y & = & d_2-d_1-a+\al/2 \\ 
z & = & d_1-\al/2             \\
w & = & d_B-d_1-a-b+\al/2 
\end{eqnarray*}
\end{lemma}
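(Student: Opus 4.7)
The plan is to solve the $4 \times 4$ linear system that expresses the four known cut sizes $d_1, d_2, d_A, d_B$ in terms of the six unknown edge capacities $x,y,z,w,a,b$ of the square.

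First I would write down the four cut equations by counting, in Fig.~\ref{f:l-l+1}(a), which edges leave each of $C_1$, $C_2$, $A=C_1\cup C_2$, and $B=C_1\cup C_4$:
\begin{eqnarray*}
d_1 &=& x+z+b \\
d_2 &=& z+y+a \\
d_A &=& x+y+a+b \\
d_B &=& z+w+a+b.
\end{eqnarray*}
Each of these is just an incidence count for the cut in the $4$-node square: $d_1$ picks up the two sides $C_1C_2, C_1C_4$ and the diagonal $C_1C_3$, etc. (These four identities also imply (\ref{e:a}) and (\ref{e:b}), which I would mention as a sanity check.)

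Next I would combine the equations to isolate $x+b$. Subtracting the second equation from the third gives $d_A-d_2=x+b-z$; adding this to $d_1=x+z+b$ yields
\[
\alpha \;=\; d_A+d_1-d_2 \;=\; 2(x+b).
\]
Since $x,b$ are non-negative integers, this simultaneously proves that $\alpha$ is even and that $x=\alpha/2-b$.

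With $x$ in hand, the remaining three formulas fall out by back-substitution, which is the last step I would carry out: substituting $x=\alpha/2-b$ into $d_1=x+z+b$ gives $z=d_1-\alpha/2$; plugging this $z$ into $d_2=z+y+a$ gives $y=d_2-d_1-a+\alpha/2$; and plugging it into $d_B=z+w+a+b$ gives $w=d_B-d_1-a-b+\alpha/2$, as claimed. There is no real obstacle here — the only point that deserves care is transcribing the edge counts of the four cuts correctly from Fig.~\ref{f:l-l+1}(a); once those are right, the derivation is purely linear-algebraic and the parity of $\alpha$ follows for free from the identity $\alpha=2(x+b)$.
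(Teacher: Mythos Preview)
Your proof is correct and follows essentially the same approach as the paper: write down the four incidence equations for the cuts $d_1,d_2,d_A,d_B$ in the square and solve the resulting linear system for $x,y,z,w$. Your derivation of $\alpha=2(x+b)$ is in fact slightly more explicit than the paper's version about why $\alpha$ is even, but otherwise the arguments are identical.
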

\begin{proof}
Note that (see Fig.~\ref{f:l-l+1}(a)):
\[
x+z=d_1-b \ \ \ \ \ y+z=d_2-a \ \ \ \ \ x+y=d_A-a-b \ \ \ \ \ z+w=d_B-a-b  
\]
Solving this equation system for $x,y,z,w$ gives the lemma. 
% \[y=x+d_2-a-d_1+b \ \ \ 2x+d_2-a-d_1+b=d_A-a-b \ \ \ 2x=d_A-d_2+d_1-2b \ \ \ x=(d_A-d_2+d_1)/2-b=\al/2-b\]
% \[z=d_1-x-b=d_1-\al/2 \ \ \ y=d_2-a-z=d_2-a-d_1+\al/2 \ \ \ w=d_B-a-b-z=d_B-a-b-d_1+\al/2\]
\end{proof}

When both $A,B$ are $\la$-cuts, then it is well known that $\la$ must be even, the square has no diagonal edges,
and each side edge has capacity $\la/2$; c.f. \cite{DKL}. The following is also known.

\begin{lemma} [\cite{DN,N-th}] \label{l:l-l+1}
Suppose that $A$ is a $\la$-cut and $B$ is a $(\la+1)$-cut. 
Then the square has no diagonal edges, and the following holds.
\begin{enumerate}[(i)] 
\item 
If $\la$ is odd then the square has one side edge of capacity $(\la-1)/2$ and other three side edges have capacity $(\la+1)/2$; 
see Fig.{\em ~\ref{f:l-l+1}(b)}. 
\item
If $\la$ is even then the square has one side edge of capacity $\la/2+1$ and other three side edges have capacity $\la/2$; 
see Fig.~{\em \ref{f:l-l+1}(c)}. 
\end{enumerate}
\end{lemma}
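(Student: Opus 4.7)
The plan is to apply Lemma~\ref{l:sol} after pinning down the four corner degrees $d_1,\ldots,d_4$ and the diagonal capacities $a,b$.

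First I would show the square has no diagonal edges. Since $H$ is $\lambda$-connected, every nontrivial cut of $H$ has at least $\lambda$ edges, so each corner satisfies $d_i \geq \lambda$. Equation~(\ref{e:a}) then reads $d_1+d_3 = 2\lambda+1-2a \geq 2\lambda$, forcing $a \leq 1/2$; since $a$ is a nonnegative integer, $a = 0$. The same argument applied to~(\ref{e:b}) gives $b = 0$.

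With $a=b=0$, the identities $d_1+d_3 = d_2+d_4 = 2\lambda+1$ imply that each of the pairs $\{d_1,d_3\}$ and $\{d_2,d_4\}$ consists of one $\lambda$-cut and one $(\lambda+1)$-cut. The WLOG minimality $d_1 \leq d_i$ then pins down $d_1=\lambda$ and $d_3=\lambda+1$. To decide whether $d_2 = \lambda$ or $d_2 = \lambda+1$, I would invoke the integrality (hence evenness) of $\alpha = d_A + d_1 - d_2 = 2\lambda - d_2$ from Lemma~\ref{l:sol}; this forces $d_2$ to be even. In the even case this gives $d_2=\lambda$, $d_4=\lambda+1$ and $\alpha = \lambda$; in the odd case it gives $d_2=\lambda+1$, $d_4=\lambda$ and $\alpha = \lambda-1$.

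The remainder is a direct substitution into the formulas of Lemma~\ref{l:sol} with $a=b=0$. The even case yields $(x,y,z,w) = (\lambda/2,\lambda/2,\lambda/2,\lambda/2+1)$, matching part~(ii); the odd case yields $(x,y,z,w) = ((\lambda-1)/2,(\lambda+1)/2,(\lambda+1)/2,(\lambda+1)/2)$, matching part~(i). The main pitfall to watch for is the second WLOG assumption $d_2 \leq d_4$: in the forced odd-case configuration we have $d_2 = \lambda+1 > \lambda = d_4$, so that ordering actually fails. The argument should therefore rely only on the primary WLOG $d_1 \leq d_i$ and conclude via the parity of $\alpha$, rather than via the ordering of $d_2$ and $d_4$.
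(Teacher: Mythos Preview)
Your proposal is correct and follows essentially the same route as the paper: both use the $\la$-connectivity bound $d_i\geq\la$ in equations~(\ref{e:a}) and~(\ref{e:b}) to force $a=b=0$ and $\{d_1,d_3\}=\{d_2,d_4\}=\{\la,\la+1\}$, then use the parity of $\al$ to pin down $d_2$, and finally substitute into Lemma~\ref{l:sol}. Your remark that the standing WLOG $d_2\le d_4$ is actually violated in the odd case (and hence should not be invoked) is a valid observation; the paper's proof, like yours, relies only on $d_1\le d_i$ and the parity argument, so nothing is lost.
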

\begin{proof}
If $\la$ is odd, $d_1,d_2$ have distinct parity, since $\al=\la+d_1-d_2$ must be even.
By (\ref{e:a},\ref{e:b}) $(d_1,d_2)=(\la,\la+1)$ and $(a,b)=(0,0)$. Then $\al/2=(\la-1)/2$ and from Lemma~\ref{l:sol} we get
$(x,y,z,w)=\left(\f{\la-1}{2},\f{\la+1}{2},\f{\la+1}{2},\f{\la+1}{2}\right)$, which is the case in Fig.~\ref{f:l+1}(b).
% \begin{eqnarray*}
% x & = & \al/2-b =(\la-1)/2 \\
% y & = & d_2-d_1-a+\al/2 = (\la+1)/2  \\ 
% z & = & d_1-\al/2 = (\la+1)/2 \\
% w & = & d_B-d_1-a-b+\al/2=(\la+1)/2
% \end{eqnarray*}

If $\la$ is even, $d_1,d_2$ have the same parity.
By (\ref{e:a},\ref{e:b}) $(d_1,d_2)=(\la,\la)$, $(a,b)=(0,0)$, and $\al/2=\la/2$. 
Then $(x,y,z,w)=\left(\f{\la}{2},\f{\la}{2},\f{\la}{2},\f{\la}{2}+1\right)$ by Lemma~\ref{l:sol},
which is the case in Fig.~\ref{f:l+1}(c).
% \begin{eqnarray*}
% x & = & \al/2-b =  \la/2 \\
% y & = & d_2-d_1-a+\al/2 = \la/2  \\ 
% z & = & d_1-\al/2 = \la/2 \\
% w & = & d_B-d_1-a-b+\al/2= \la/2+1
% \end{eqnarray*}
\end{proof}

The next lemma deals with two $(\la+1)$-cuts.

\begin{lemma} \label{l:l+1}
Suppose that $A,B$ are $(\la+1)$-cuts. 
If $\la$ is even then one of the following holds.
\begin{enumerate}[(a)]
\item
The square has no diagonal edges and has two adjacent side edges of capacity $\la/2$ 
while the other two edges have capacity $\la/2+1$; 
see Fig.~{\em \ref{f:l+1}(a)}.
\item
The square has one diagonal edge and all side edges have capacity $\la/2$; 
see Fig.~{\em \ref{f:l+1}(b)}. 
\end{enumerate}

If $\la$ is odd then one of the following holds.
\begin{enumerate}[(a)]\setcounter{enumi}{2}
\item
The square has no diagonals, two opposite side edges have capacity $(\la+1)/2$,
one side edge has capacity $(\la+3)/2$ and its opposite side edge has capacity $(\la-1)/2$; 
see Fig.~{\em \ref{f:l+1}(c)}.
\item
The square has one diagonal edge, 
two side edge incident to one end of the the diagonal edge have capacity $(\la-1)/2$, 
while the other have capacity $(\la+1)/2$; 
see Fig.~{\em \ref{f:l+1}(d)}.
\item
The square has both diagonal edges of capacity $1$ each, 
and all side edges have capacity $(\la-1)/2$;
see Fig.{\em ~\ref{f:l+1}(e)}.
\item
The square has no diagonal edges and all side edges have capacity $(\la+1)/2$;
see Fig.~{\em \ref{f:l+1}(f)}.
\end{enumerate}
\end{lemma}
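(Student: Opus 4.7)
The plan is to deduce that $a, b \in \{0, 1\}$, then enumerate all $(a, b)$ and parity constraints, and finally apply Lemma~\ref{l:sol} in each case to read off $(x,y,z,w)$.

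\medskip
\noindent\textbf{Step 1 (bounding the diagonals).}
Since $H$ is $\la$-connected, $d_i \geq \la$ for each corner set $C_i$. Because $d_A = d_B = \la+1$, equalities (\ref{e:a}) and (\ref{e:b}) read $d_1+d_3 = 2\la+2-2a$ and $d_2+d_4=2\la+2-2b$. Combined with $d_i \geq \la$, this forces $a,b \leq 1$; since capacities are non-negative integers, $a,b \in \{0,1\}$.

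\medskip
\noindent\textbf{Step 2 (parity from Lemma~\ref{l:sol}).}
Lemma~\ref{l:sol} says $\al = d_A+d_1-d_2 = (\la+1)+d_1-d_2$ is even, so $d_1-d_2$ has the opposite parity of $\la+1$, i.e.\ $d_1+d_2 \equiv \la+1 \pmod 2$. Using also $d_1+d_3\equiv 0$ and $d_2+d_4\equiv 0$ from (\ref{e:a}) and (\ref{e:b}), the parity of the quadruple $(d_1,d_2,d_3,d_4)$ is fully determined by $\la$.

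\medskip
\noindent\textbf{Step 3 (case analysis on $(a,b)$).}
Using the WLOG assumptions $d_1\leq d_i$, $d_2\leq d_4$, and $a\geq b$ whenever $d_1=d_2$, I enumerate:

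For $\la$ even ($d_1,d_2$ have distinct parity):
$(a,b)=(0,0)$ forces $d_1=\la$, $d_2=d_4=\la+1$, $d_3=\la+2$;
$(a,b)=(1,0)$ forces $d_1=d_3=\la$ and (by the parity constraint) $d_2=d_4=\la+1$;
$(a,b)=(0,1)$ forces $d_1=d_2=\la$ but contradicts the WLOG rule $a\geq b$;
$(a,b)=(1,1)$ forces $d_1=d_2=\la$, incompatible with distinct parity.

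For $\la$ odd ($d_1,d_2$ have equal parity):
$(a,b)=(0,0)$ splits into $(d_1,d_2,d_3,d_4)=(\la+1,\la+1,\la+1,\la+1)$ or $(\la,\la,\la+2,\la+2)$;
$(a,b)=(1,0)$ forces $(d_1,d_2,d_3,d_4)=(\la,\la,\la,\la+2)$;
$(a,b)=(0,1)$ is ruled out by the WLOG rule;
$(a,b)=(1,1)$ forces all $d_i=\la$.

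Thus exactly six scenarios survive, matching the six cases (a)--(f) of the statement.

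\medskip
\noindent\textbf{Step 4 (computing the side capacities).}
In each surviving case, I plug $(d_1,d_2,d_A,d_B,a,b)$ into the formulas of Lemma~\ref{l:sol} to get $\al/2$ and then $(x,y,z,w)$. The resulting quadruples are $(\la/2,\la/2+1,\la/2,\la/2+1)$ and $(\la/2,\la/2,\la/2,\la/2)$ for $\la$ even; and $((\la+1)/2,(\la+1)/2,(\la+1)/2,(\la+1)/2)$, $((\la+1)/2,(\la+1)/2,(\la-1)/2,(\la+3)/2)$, $((\la+1)/2,(\la-1)/2,(\la-1)/2,(\la+1)/2)$, $((\la-1)/2,(\la-1)/2,(\la-1)/2,(\la-1)/2)$ for $\la$ odd. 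Recalling from Fig.~\ref{f:l-l+1}(a) that $z$ is on $C_1C_2$ and $w$ on $C_3C_4$ (so $z,w$ are opposite sides, as are $x,y$) while $x,z$ meet at $C_1$ and $y,w$ at $C_3$, each quadruple matches the description in the corresponding figure.

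\medskip
\noindent\textbf{Expected obstacle.}
The main nuisance is the bookkeeping: making sure the WLOG assumptions (especially the tie-breaker $a\geq b$ when $d_1=d_2$) are invoked correctly to eliminate symmetric duplicates without accidentally discarding a genuine configuration, and that in each case the computed pattern of equal/distinct side capacities translates to the right adjacency picture (e.g.\ ``adjacent'' in case (a) vs.\ ``opposite'' in case (c)), which depends on the orientation of labels around the square rather than on arithmetic.
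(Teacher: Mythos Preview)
Your proof is correct and follows essentially the same route as the paper: use the parity of $\al$ from Lemma~\ref{l:sol} together with~(\ref{e:a}),(\ref{e:b}) and $d_i\ge\la$ to pin down the possible $(a,b)$ and $(d_1,d_2)$, then plug into Lemma~\ref{l:sol} to read off $(x,y,z,w)$. Your write-up is in fact more explicit than the paper's (you spell out the bound $a,b\le 1$ and the role of the tie-breaker $a\ge b$ when $d_1=d_2$), but the underlying argument is the same.
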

\begin{proof}
We apply Lemma~\ref{l:sol}. 
Suppose that $\la$ is even. Since $\al=\la+1+d_1-d_2$ must be even, $d_1,d_2$ have distinct parity.
By (\ref{e:a},\ref{e:b}) $(d_1,d_2)=(\la,\la+1)$ and $b=0$. Then $\al/2=\la/2$ and we have two cases. 
\begin{enumerate}[(a)]
\item
$(a,b)=(0,0)$. Then $(x,y,z,w)=\left(\f{\la}{2},\f{\la}{2}+1,\f{\la}{2},\f{\la}{2}+1\right)$,
which is the case in Fig.~\ref{f:l+1}(a).
% \begin{eqnarray*}
% x & = & \al/2-b =\la/2-0 = \la/2 \\
% y & = & d_2-d_1-a+\al/2 =1+\la/2 \\ 
% z & = & d_1-\al/2 =\la/2 \\
% w & = & d_B-d_1-a-b+\al/2=1+\la/2
% \end{eqnarray*}
\item
$(a,b)=(1,0)$. Then $(x,y,z,w)=\left(\f{\la}{2},\f{\la}{2},\f{\la}{2},\f{\la}{2}\right)$,
which is the case in Fig.~\ref{f:l+1}(b).
% \begin{eqnarray*}
% x & = & \al/2-b = \la/2 \\
% y & = & d_2-d_1-a+\al/2 = \la/2 \\ 
% z & = & d_1-\al/2 = \la/2 \\
% w & = & d_B-d_1-a-b+\al/2=\la/2
% \end{eqnarray*}
\end{enumerate}
If $\la$ is odd, $d_1,d_2$ have the same parity.
By (\ref{e:a},\ref{e:b}) $(d_1,d_2)=(\la,\la)$ or $(d_1,d_2)=(\la+1,\la+1)$; in both cases $\al/2=(\la+1)/2$. \\ 
For $(d_1,d_2)=(\la,\la)$ and $\al/2=(\la+1)/2$ we have the following cases.
\begin{enumerate}[(a)]\setcounter{enumi}{2}
\item 
$(a,b)=(0,0)$. Then $(x,y,z,w)=\left(\f{\la+1}{2},\f{\la+1}{2},\f{\la-1}{2},\f{\la+3}{2}\right)$,
which is the case in Fig.~\ref{f:l+1}(c).
\item 
$(a,b)=(1,0)$. Then $(x,y,z,w)=\left(\f{\la+1}{2},\f{\la-1}{2},\f{\la-1}{2},\f{\la+1}{2}\right)$,
which is the case in Fig.~\ref{f:l+1}(d).
\item
$(a,b)=(1,1)$. Then $(x,y,z,w)=\left(\f{\la-1}{2},\f{\la-1}{2},\f{\la-1}{2},\f{\la-1}{2}\right)$,
which is the case in Fig.~\ref{f:l+1}(e).
\end{enumerate}
For $(d_1,d_2)=(\la+1,\la+1)$ we have one case
\begin{enumerate}[(a)]\setcounter{enumi}{5}
\item
$(a,b)=(0,0)$. Then $(x,y,z,w)=(\f{\la+1}{2}, \f{\la+1}{2},\f{\la+1}{2},\f{\la+1}{2})$, which is the case in Fig.~\ref{f:l+1}(f).
\end{enumerate}

This concludes the proof of the lemma.
\end{proof}

\begin{figure} \centering \includegraphics{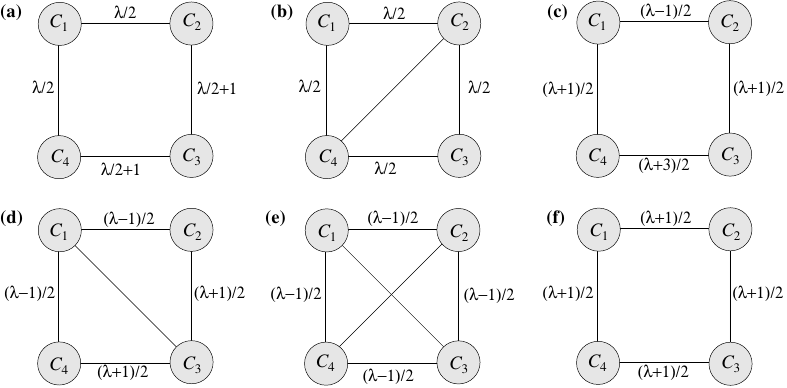}
\caption{Illustration to Lemma \ref{l:l+1}.}
\label{f:l+1} \end{figure}

From Lemma~\ref{l:l+1} we get the following.

\begin{corollary} \label{c:la}
Let $H$ be a $\la$-edge connected graph and let $\FF=\{S \subs V: d_H(S) \in \{\la,\la+1\}\}$.
If $\la$ is even then $\FF$ is uncrossable. 
\end{corollary}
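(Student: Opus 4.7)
The plan is to verify uncrossability by a case analysis on the pair $(d_H(A),d_H(B))$ with both values in $\{\la,\la+1\}$. Throughout, we may assume $A,B$ are crossing (otherwise there is nothing to check), so all four corner sets $C_1,\ldots,C_4$ of the square of $A,B$ are non-empty and the setup of Section~\ref{s:pre} applies. The goal is to show that in each case either $d_1,d_3 \in \{\la,\la+1\}$ (so $A\cap B$ and $A\cup B$ lie in $\FF$) or $d_2,d_4 \in \{\la,\la+1\}$ (so $A\sem B$ and $B\sem A$ lie in $\FF$), using $\la$-connectivity of $H$, which gives $d_i \geq \la$ for each $i$.

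Case 1, where $d_A=d_B=\la$: identities (\ref{e:a}) and (\ref{e:b}) give $d_1+d_3=2\la-2a \leq 2\la$ and $d_2+d_4=2\la-2b\leq 2\la$, which together with $d_i \geq \la$ force $d_1=d_2=d_3=d_4=\la$ and $a=b=0$. Both uncrossing options hold.

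Case 2, where $\{d_A,d_B\}=\{\la,\la+1\}$: invoke Lemma~\ref{l:l-l+1}(ii) for even $\la$. The side-edge capacities read off from Fig.~\ref{f:l-l+1}(c), combined with $a=b=0$, immediately give the four corner degrees and in particular $\{d_1,d_3\} \subseteq \{\la,\la+1\}$, so $A \cap B, A \cup B \in \FF$.

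Case 3, where $d_A=d_B=\la+1$: apply Lemma~\ref{l:l+1}, which for even $\la$ leaves only configurations (a) and (b). In configuration (a) the side-edge weights in Fig.~\ref{f:l+1}(a) combined with $a=b=0$ yield $d_2=d_4=\la+1$ (while $d_1=\la$ and $d_3=\la+2$), so $A\sem B,B\sem A \in \FF$. In configuration (b) the side weights together with $(a,b)=(1,0)$ give $d_1=d_3=\la$ (and $d_2=d_4=\la+1$), so $A\cap B,A\cup B \in \FF$. This exhausts the cases.

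The main obstacle is the asymmetric configuration (a) of Lemma~\ref{l:l+1}: only two of the four corner degrees lie in $\{\la,\la+1\}$, so one must identify the right pair. The normalization assumptions on $d_1,d_2,d_4$ and the orientation of the square fix which side edges carry weight $\la/2$ versus $\la/2+1$, which in turn determines that the "good" pair is $\{d_2,d_4\}$ rather than $\{d_1,d_3\}$; once this bookkeeping is done, each case reduces to plugging the side-edge weights and diagonal capacities into $d_i=$ (sum of weights of edges incident to $C_i$ in the square).
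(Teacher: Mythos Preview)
Your proof is correct and follows the same structural approach as the paper: reduce to the crossing case, appeal to the square decomposition of Section~\ref{s:pre}, and read off the corner degrees from the relevant lemma. The paper's own proof is terser---it explicitly cites only Lemma~\ref{l:l+1} (the $(\la+1,\la+1)$ case) and leaves the $(\la,\la)$ and $(\la,\la+1)$ cases implicit---so your explicit three-way case split on $(d_H(A),d_H(B))$ actually makes the argument more complete without changing its substance.
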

\begin{proof}
Let $A,B \in \FF$. We need to show that $A \cap B,A \cup B \in \FF$ or $A \sem B,B \sem A \in \FF$. 
If one of $A \sem B,B \sem A$                 is empty then $\{A \cap B,A \cup B\}=\{A,B\}$. 
If one of $A \cap B, V \sem (A \cup B)$  is empty then $\{A \sem B,B \sem A\}=\{A,B\}$. 
In both cases, the lemma holds. 

If all corner four sets are non-empty then one of the cases (a,b) in Lemma~\ref{l:l+1} holds. 
In case (a) $A \sem B,B \sem A \in \FF$ (see Fig.~\ref{f:l+1}(a)), and 
in case (b) all corner sets are in $\FF$ (see Fig.~\ref{f:l+1}(b)).
\end{proof}

Let $H$ be a graph and $\FF$ a family of node subsets (or cuts) of $H$. 
It is easy to see that the relation $\{(u,v) \in V \times V: \mbox{no member of } \FF \mbox{ separates } u,v\}$
is an equivalence; the {\bf quotient graph} of $\FF$ is obtained by shrinking each equivalence class of this relation into a single node,
and replacing every set of parallel edges by a single capacitated edge. 
% We say that $S \in \FF$ is {\bf  $\FF$-separating} if $S$ does not cross with any member in $\FF$. 
$\FF$ is {\bf symmetric} if $S \in \FF$ implies $V \sem S \in \FF$.
We say that $\FF$ is a {\bf crossing family} if $A \cap B,A \cup B \in \FF$ for any $A,B$ that cross,
and if in addition $(A \sem B) \cup (B \sem A) \notin \FF$ then $\FF$ is a {\bf proper symmetric crossing family}.
By \cite{DN,N-th}, the problem of covering a symmetric crossing family is equivalent to covering the minimum cuts 
of a $2$-edge connected graph, while \cite{TZ2} shows that the later problem admits a $(1.5+\eps)$-approximation algorithm.

We need the following result from \cite{DN,N-th} (specifically, see Lemmas 3.11--3.15 in \cite{N-th}.) 

\begin{figure} \centering \includegraphics{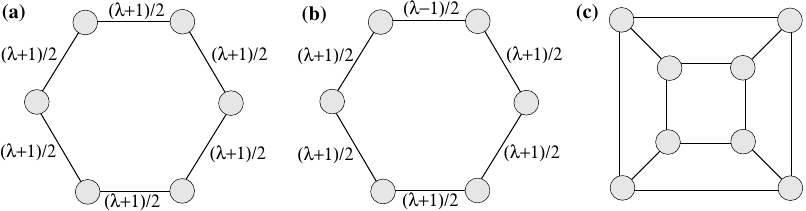}
\caption{Illustration to Lemma~\ref{l:DN}.}
\label{f:cycle} \end{figure}

\begin{lemma} [\cite{DN,N-th}] \label{l:DN}
Let $H$ be a $\la$-edge-connected graph with $\la$ odd, and let $\FF$ be the family of $(\la+1)$-cuts of $H$. 
Then there exists a subfamily $\FF'$ of the $\la$-cuts of $H$ such that 
$\FF \cup \FF'$ can be decomposed in polynomial time into parts whose union contains $\FF$, 
such that every cut in $\FF$ belongs to at most $2$ parts, and such that 
the cuts in each part correspond to the $(\la+1)$-cuts of its quotient graph, which is either (see Fig.~\ref{f:cycle}):
\begin{enumerate}[(a)]
\item
A cycle of edges of capacity $(\la+1)/2$ each.
\item
A cycle with one edge of capacity $(\la-1)/2$ and other edges of capacity $(\la+3)/2$ each. 
\item
A cube graph, which can occur only if $\la=3$.
\end{enumerate}
\end{lemma}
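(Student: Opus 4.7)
The plan is to build a cactus-like representation of the $(\la+1)$-cuts, adapted to the case $\la$ odd. First observe that since $\la$ is odd, two $\la$-cuts cannot cross: the parity observation recalled just before Lemma~\ref{l:l-l+1} would then force $\la$ to be even. Hence the $\la$-cuts already form a laminar family (up to complementation), and the real task is to decompose the crossing $(\la+1)$-cuts into cactus-like parts, using a subfamily $\FF'$ of $\la$-cuts as structural connectors between them.

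Next I would classify crossing pairs of $(\la+1)$-cuts using Lemma~\ref{l:l+1}; for $\la$ odd only cases (c)--(f) arise. The symmetric cases (d) and (f), in which all nonzero square edges have capacity $(\la+1)/2$, propagate into cycles whose every edge has capacity $(\la+1)/2$, i.e.\ the type-(a) parts. The asymmetric case (c), with one light side of capacity $(\la-1)/2$ opposite a heavy side of capacity $(\la+3)/2$, is the source of type-(b) cycles: iterating such squares yields a longer cycle with the required cut pattern, and the $\la$-cuts sitting at the endpoints of the light side are the ones added to $\FF'$ to play the role of quotient vertices. Case (e) is very rigid — both diagonals carry capacity $1$, all side edges carry capacity $(\la-1)/2$, and all four corner sets are $\la$-cuts — and when such squares tile, the only closed quotient possible is the cube $Q_3$. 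The requirement $\la=3$ then follows from edge-transitivity of $Q_3$: with uniform edge capacity $c$, the corner sets being $\la$-cuts gives $3c=\la$ while the adjacent-pair cuts being $(\la+1)$-cuts gives $4c=\la+1$, forcing $c=1$ and $\la=3$.

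The construction itself picks an arbitrary $(\la+1)$-cut $A$, enumerates the $(\la+1)$-cuts crossing it, and uses the corner-set structure of the local squares to arrange them in a cyclic order around $A$; an induction on the number of crossing cuts shows that this local order extends globally to a single cycle (or to the cube in case (e)) containing $A$. Since any cut $A$ in a type-(a) or type-(b) cycle is flanked by exactly two corner sets of the quotient, it participates in at most two parts. Polynomial-time computability of the whole decomposition follows from standard enumeration of near-minimum cuts.

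The main obstacle will be the global consistency of this induction: Lemma~\ref{l:l+1} only controls two cuts at a time, and one must rule out accumulated crossings gluing into something more complex than a simple cycle or cube — for instance two cycles sharing more than one vertex, or theta-graph-like patterns. I would address this by showing that any triple of mutually crossing $(\la+1)$-cuts induces a very restricted local picture that canonically pairs each cut's crossing-neighbors, and that this pairing propagates consistently across a connected component of the crossing graph. The case-(e) cube requires a separate combinatorial check: once a single case-(e) square appears, its four corner $\la$-cuts force case-(e) squares on adjacent pairs, and a direct enumeration shows the only closed consistent configuration is $Q_3$ with $\la=3$.
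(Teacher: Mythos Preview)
The paper does not give a proof of this lemma at all: it is stated with the citation \cite{DN,N-th} and used as a black box (see the sentence ``We need the following result from \cite{DN,N-th} (specifically, see Lemmas 3.11--3.15 in \cite{N-th}).'' preceding it). So there is nothing in the paper to compare your proposal against; the actual proof lives in the cited references, and the comparison you were asked for is not possible from this source.

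On the substance of your sketch: the overall cactus-style strategy (laminarity of $\la$-cuts for odd $\la$, local classification of crossing $(\la+1)$-cuts via Lemma~\ref{l:l+1}, and then gluing local squares into global cycle/cube parts) is indeed the natural route and is consonant with what the 2-level cactus model of \cite{DN,N-th} does. However, your case-by-case assignment contains an error: case~(d) of Lemma~\ref{l:l+1} does \emph{not} have ``all nonzero square edges of capacity $(\la+1)/2$''; it has a diagonal of capacity $1$ and two side edges of capacity $(\la-1)/2$, so it does not feed directly into a uniform type-(a) cycle as you claim. You will need to handle (d) more carefully --- in particular, the two $(\la-1)/2$ sides and the diagonal interact with the type-(b) pattern, not the type-(a) pattern. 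Your own acknowledgment that the global-consistency induction is the ``main obstacle'' is accurate; what you have here is a plausible outline rather than a proof, and the hard combinatorial work (ruling out theta-like configurations, showing each $(\la+1)$-cut lies in at most two parts) is precisely what Lemmas~3.11--3.15 of \cite{N-th} are devoted to.
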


%%%%%%%%%%%%%%%%%%%%%%%%%%%%%%%%%%%%%%
\section{Algorithm for {\nmcc} (Theorem~\ref{t:la})} \label{s:la}
%%%%%%%%%%%%%%%%%%%%%%%%%%%%%%%%%%%%%%

It is known that the problem of finding a minimum cost cover of cuts of size $<k$ 
of a $\la$-connected graph admits the following approximation ratios for $k \leq \la+2$:
\begin{itemize}
\item
$1.5+\eps$ if $k=\la+1$ \cite{TZ,TZ2}.
\item
$2$ if $k=\la+2$ and $\la$ is even by Corollary~\ref{c:la}, 
since the problem of covering an uncrossable family admits ratio $2$ \cite{GGPS}. 
\end{itemize}

We will show that this implies the approximation ratios for {\nmcc} claimed in Theorem~\ref{t:la}:  
\begin{itemize}
\item
$k-\la_0$                   \hspace*{1.55cm} if $\la_0,k$ are both even. 
\item
$k-\la_0+1/2+\eps$ \                                if $\la_0,k$ have distinct parity.
\item
$k-\la_0+1+\eps$    \hspace*{0.38cm}  if $\la_0,k$ are both odd.
\end{itemize}

Algorithm~\ref{alg:even} computes a solution as required when $\la_0,k$ are both even. 
\begin{algorithm} \caption{$\la_0,k$ are both even} 
\label{alg:even}
$F \gets \empt$ \\
\For{\em $\la=\la_0$ to $\la=k-2$}
{
Find a $2$-approximate cover $J$ of $\{\la,\la+1\}$-cuts of  $G_0 \cup F$ where edges in $F$ have capacity/multiplicity $k-\la_0$. \\
$F \gets F \cup J$.
}
\Return{$F$}
\end{algorithm}

The correctness of the algorithm is straightforward. The approximation ratio $k-\la_0$ follows from the observation 
that we pay $2 \cdot {\sf opt}$ at each iteration, so ${\sf opt}$ per increasing the connectivity by $1$. 
If $\la_0$ is odd and $k$ is even, then in the first iteration we apply the $(1.5+\eps)$-approximation algorithm of \cite{TZ} 
for covering $\la$-cuts only and have an extra $1/2+\eps$ term.
If $\la_0$ is even and $k$ is odd, then in the last iteration we apply a $2$-approximation algorithm 
for covering $(k-1)$-cuts only, and also have an extra $1/2+\eps$ term \cite{TZ2}. 
If both of these occur then the extra term is $1+2\eps \approx 1+\eps$. 

This concludes the proof of Theorem~\ref{t:la}. 

%%%%%%%%%%%%%%%%%%%%%%%%%%%%%%%%%%%%%%%%%
\section{Algorithm for ({\em k,q})-{\fgc} (Theorem~\ref{t:fgc})} \label{s:fgc}
%%%%%%%%%%%%%%%%%%%%%%%%%%%%%%%%%%%%%%%%%

Let $\langle G=(V,E), U, k \rangle$ be an instance of $(k,q)$-{\fgc}. 
Let $H$ be a subgraph of $G$.
 We will use the notation $d(S)=d_H(S)$ and $d_U(S)=d_{H \cap U}(S)$. 
Recall that a subgraph $H$ of $G$ is $(k,q)$-flex-connected if (\ref{e:flex}) holds, namely if:
\[
d(S) \geq k+\min\{d_U(S),q\} \ \ \ \ \ \forall \ \empt \neq S \subsetneq V
\]
Suppose that $H$ is $(k,q-1)$-flex-connected. Then 
\[
d(S) \geq k+\min\{d_U(S),q-1\} \ \ \ \ \ \forall \ \empt \neq S \subsetneq V
\]
One can see that if for $\empt \neq S \subsetneq V$ the later inequality holds but not the former then  
$d(S)=k+q-1$ and $d_U(S) \geq q$. 
Consequently, to make $H$ $(k,q)$-flex-connected we need to cover the set family 
\[
\FF_q(H)=\{\empt \neq S \subsetneq V:d(S)=k+q-1, d_U(S) \geq q\} \ .
\]
Thus the following algorithm computes a feasible solution for $(k,q)$-{\fgc}, see also \cite{CJ}.
\begin{algorithm} \caption{\sc Iterative-Cover$(G,c,U,k,q)$} 
\label{alg:fgc}
Compute a $k$-edge-connected spanning subgraph $H=(V,J)$ of $G$. \\
\For{\em $\ell=1$ to $q$}
{Add to $H$ a cover $J_\ell$ of $\FF_\ell(H)$.
}
\Return{$H$}.
\end{algorithm}

We can use this observation to prove parts (i) and (iii) of Theorem~\ref{t:fgc}: 
that $(k,1)$-{\fgc} admits ratio $3.5+\eps$ if $k$ is odd, and that $(k,q)$-{\fgc} admits ratio $\al+\f{2q}{k}$ for unit costs,
where $\al$ is the best known ratio for the {\sc Min-Size $k$-Edge-Connected Subgraph} problem.

If $k$ is odd then it is known that the family $\{\empt \neq S \subsetneq V: d_H(S)=k\}$ is laminar, 
and thus any its subfamily, and in particular $\FF_1(H)$, is also laminar.
Part (i) now follows from the fact that the problem of covering a laminar family admits ratio $1.5+\eps$ \cite{TZ}. 

For part (iii) we need the known fact that any inclusion minimal cover $J$ of a set family $\FF$ is a forest.
To see this, suppose to the contrary that $J$ contains a cycle $C$. 
Let $e=uv$ be an edge of $C$.
Since $P = C \sem \{e\}$ is a $uv$-path,
then for any $A \in \FF$ covered by $e$, there is $e' \in P$ that covers $A$. 
This implies that $J \sem \{e\}$ also covers $\FF$, contradicting the minimality of $J$.

On the other hand, ${\sf opt} \geq kn/2$, hence $|J_i|/{\sf opt} \leq 2(n-1)/kn <2/k$. 
Thus the overall approximation ratio is $\al+2q/k$, as claimed. 
It remains to show that the algorithm can be implemented in polynomial time.
By \cite{BCHI} (see also \cite[Lemma~2.8]{CJ}), if $H$ is $(k,i-1)$-flex-connected then $|\FF_i(H)|=O(n^4)$, 
and the members of $\FF_i(H)$ can be listed in polynomial time.
Since a $(k,i-1)$-flex-connected $H$ is $(k,i)$-flex-connected if and only if $\FF_i(H)=\empt$, 
we can compute at iteration $i$ an inclusion minimal cover of $\FF_i(H)$ in polynomial time 

\medskip

The proof of part (ii) of Theorem~\ref{t:fgc} relies on several lemmas.

\begin{lemma} \label{l:C1}
Let $H$ be $(k,q-1)$-flex-connected, $q \geq 2$, and let $A,B \in \FF_q(H)$ cross. 
Then:
\begin{enumerate}[(i)]
\item
If $d_U(C_1) \geq 1$ then $d(C_1)+d(C_2) \geq 2k+q$.
\item
If $d_U(C_1)=0$ then $C_2,C_4 \in \FF_q(H)$. % and $d(C_1,C_3)=0$. 
\end{enumerate}
\end{lemma}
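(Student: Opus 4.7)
The plan is to work in the square notation of Figure~\ref{f:l-l+1}(a), tracking both the total side/diagonal capacities $x,y,z,w,a,b$ and their unsafe counterparts $x_U,y_U,z_U,w_U,a_U,b_U$. The single combinatorial tool I need is the unsafe analogue of (\ref{e:a})--(\ref{e:b}), namely the direct counting identity
\[
d_U(C_1)+d_U(C_2)=d_U(A)+2z_U\geq d_U(A)\geq q,
\]
where the last inequality uses $A\in\FF_q(H)$.

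For part~(i), I will argue by contradiction. The hypothesis $d_U(C_1)\geq 1$ combined with $(k,q-1)$-flex-connectivity gives $d(C_1)\geq k+\min\{d_U(C_1),q-1\}\geq k+1$, and the minimality convention $d_1\leq d_2$ propagates this to $d(C_2)\geq k+1$. Assuming $d(C_1)+d(C_2)\leq 2k+q-1$ then pins both cut values inside $[k+1,k+q-2]$. Applying $(k,q-1)$-flex-connectivity to $C_1$ and $C_2$ individually now rules out $d_U(C_i)\geq q-1$ and yields $d_U(C_i)\leq d(C_i)-k$; summing gives $d_U(C_1)+d_U(C_2)\leq q-1$, which contradicts the displayed identity.

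For part~(ii), the hypothesis $d_U(C_1)=0$ immediately forces $x_U=z_U=b_U=0$, so the only unsafe edges on $\delta(A)$ sit on the $C_2C_3$ and $C_2C_4$ links, and the only unsafe edges on $\delta(B)$ sit on the $C_3C_4$ and $C_2C_4$ links. Consequently $d_U(C_2)=y_U+a_U=d_U(A)\geq q$ and symmetrically $d_U(C_4)=w_U+a_U=d_U(B)\geq q$, so flex-connectivity gives $d(C_2),d(C_4)\geq k+q-1$. Combining this with the identity $d(C_2)+d(C_4)=2(k+q-1)-2b\leq 2(k+q-1)$ pinches both values to exactly $k+q-1$ (and incidentally forces $b=0$), so $C_2,C_4\in\FF_q(H)$.

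The main subtlety is part~(i): the contradiction uses $(k,q-1)$-flex-connectivity in two different directions --- first to boost $d(C_1)\geq k+1$ from $d_U(C_1)\geq 1$, and then to cap $d_U(C_i)$ from above in terms of the cut deficit $d(C_i)-k$ --- and these two uses are matched through the uncrossing identity $d_U(C_1)+d_U(C_2)\geq q$. The minimality convention $d_1\leq d_2$ is essential here, since without it the implication $d(C_1)\geq k+1\Rightarrow d(C_2)\geq k+1$ fails and small four-node examples can violate the conclusion. Once this combination is set up, part~(ii) reduces to routine bookkeeping, with $d(C_2)=d(C_4)=k+q-1$ falling out as a by-product of the pinching.
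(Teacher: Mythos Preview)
Your proof is correct and uses the same ingredients as the paper's: the identity $d_U(C_1)+d_U(C_2)\geq d_U(A)\geq q$, $(k,q-1)$-flex-connectivity applied to each corner, the ordering convention $d_1\leq d_2$, and the submodular identity~(\ref{e:b}) for part~(ii). The only difference is cosmetic: for part~(i) the paper argues directly with a case split on whether $d_U(C_2)\geq 1$, whereas your contradiction argument handles both cases uniformly by bounding $d_U(C_i)\leq d(C_i)-k$; this is arguably a little cleaner but not a different route.
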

\begin{proof}
Suppose that $d_U(C_1) \geq 1$. Note that 
$$
d_U(C_1)+d_U(C_2) \geq d_U(A) \geq  q \ .
$$
Since $H$ is $(k,q-1)$-flex-connected  
$$
d(C_1)+d(C_2)-2k \geq \min\{d_U(C_1),q-1\}+\min\{d_U(C_2),q-1\} \ .
$$
If $d_U(C_2) \geq 1$ and $q \geq 2$ then the r.h.s. is at least
$$
\min\{\min\{d_U(C_1),d_U(C_2)\}+q-1,d_U(C_1)+d_U(C_2),2q-2\}  \geq \min\{q,q,2q-2\}=q
$$
% Consequently, $d(C_1)+d(C_2)-2k \geq q$, as required.
If $d_U(C_2) =0$ then $d_U(C_1) \geq q$, hence $d(C_1) \geq k+q-1$. 
Since $d(C_2) \geq d(C_1)$, we get $d(C_1) +d(C_2) \geq 2(k+q-1) \geq 2k+q$.

Suppose that $d_U(C_1)=0$. Then $d_U(C_2)=d_U(A)\geq q$ and $d_U(C_4)=d_U(B) \geq q$.
Since $H$ is $(k,q-1)$-flex-connected, $d(C_2) \geq k+q-1$ and $d(C_4) \geq k+q-1$.
Therefore 
$$
2(k+q-1) \leq d(C_2)+d(C_4) \leq d(A)+d(B) \leq 2(k+q-1)
$$ 
Hence equality holds everywhere, hence $d(C_2)=d(C_4) = k+q-1$, so $C_2,C_4 \in \FF_q(H)$. 
\end{proof}

\begin{lemma} \label{l:even}
Let $k$ be even. If $H$ is $(k,1)$-flex-connected then $\FF_2(H)$ is uncrossable.
\end{lemma}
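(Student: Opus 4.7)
The plan is to argue uncrossability of $\FF_2(H)$ by a short case analysis that plugs into Lemmas~\ref{l:l+1} and~\ref{l:C1}. I would fix $A, B \in \FF_2(H)$ and first dispose of the degenerate case where one of the four corner sets $C_1 = A \cap B$, $C_2 = A \sem B$, $C_3 = V \sem (A \cup B)$, $C_4 = B \sem A$ is empty. Using the symmetry of $d(\cdot)$ and $d_U(\cdot)$ under complementation (so that $S \in \FF_2(H)$ iff $V \sem S \in \FF_2(H)$), in each such subcase one of the pairs $\{A \cap B, A \cup B\}$ or $\{A \sem B, B \sem A\}$ coincides, up to complementation, with $\{A, B\}$, and therefore trivially sits inside $\FF_2(H)$.

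Next, I would assume that all four corner sets are non-empty. Since $H$ is $(k,1)$-flex-connected it is in particular $k$-edge-connected, and $d(A) = d(B) = k+1$ with $k$ even. Thus Lemma~\ref{l:l+1} applies with $\la = k$, and only its cases (a) and (b) are possible. In both of these, the smallest corner satisfies $d(C_1) = k$ and the $C_1 C_3$-diagonal capacity $b$ equals $0$.

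The key step is then to combine this with $(k,1)$-flex-connectivity: if $d_U(C_1) \geq 1$ then the definition would force $d(C_1) \geq k + \min\{d_U(C_1),1\} = k+1$, contradicting $d(C_1) = k$; hence $d_U(C_1) = 0$. Now Lemma~\ref{l:C1}(ii) applied with $q = 2$ immediately gives $C_2, C_4 \in \FF_2(H)$, that is $A \sem B, B \sem A \in \FF_2(H)$, which is exactly the uncrossing condition required for the crossing pair $A, B$.

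I expect no serious obstacle beyond bookkeeping: the structural dichotomy of Lemma~\ref{l:l+1} and the membership criterion of Lemma~\ref{l:C1}(ii) do all the work. The one place where evenness of $k$ is genuinely used is in restricting Lemma~\ref{l:l+1} to its cases (a) and (b); in the odd-$\la$ cases (c)--(f) the smallest corner could satisfy $d(C_1) = k+1$, in which case $d_U(C_1) = 0$ would no longer be forced and the appeal to Lemma~\ref{l:C1}(ii) would break down — consistent with the lemma being stated only for even $k$.
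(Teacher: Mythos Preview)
Your proposal is correct and follows essentially the same route as the paper: invoke Lemma~\ref{l:l+1}(a,b) for $\la=k$ even to pin down the corner degrees, deduce $d_U(C_1)=0$, and then apply Lemma~\ref{l:C1}(ii). The only cosmetic difference is that the paper rules out $d_U(C_1)\geq 1$ via the inequality $d(C_1)+d(C_2)=2k+1<2k+q$ of Lemma~\ref{l:C1}(i), whereas you argue directly from $d(C_1)=k$ and $(k,1)$-flex-connectivity; both yield the same conclusion.
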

\begin{proof}
Let $A,B \in \FF_2(H)$ cross. By Lemma~\ref{l:l+1}(a,b), $d(C_1)+d(C_2)=2k+1$.
Thus we must be in case~(ii) of Lemma~\ref{l:C1}, implying $C_2,C_4 \in \FF_q(H)$.  
\end{proof}

We need some definitions to handle the case $q=2$ and $k$ odd. 
Two sets $A,B \subset V$ {\bf cross} if  
$A \cap B, V \sem (A \cup B)$ are non-empty.
Let $\FF$ be a set family. 
% We say that $S \in \FF$ is {\bf  $\FF$-separating} if $S$ does not cross with any member in $\FF$. 
We say that $\FF$ is a {\bf crossing family} if $A \cap B,A \cup B \in \FF$ for any $A,B$ that cross, and
if in addition $(A \sem B) \cup (B \sem A) \notin \FF$ then $\FF$ is a {\bf proper crossing family}.
$\FF$ is {\bf symmetric} if $S \in \FF$ implies $V \sem S \in \FF$.
By \cite{DN,N-th}, the problem of covering a symmetric proper crossing family is equivalent to covering the minimum cuts 
of a $2$-edge connected graph, while \cite{TZ2} shows that the later problem admits a $(1.5+\eps)$-approximation algorithm.
As was mentioned, $\FF_1(H)$ is laminar if $k$ is odd, 
and it is also not hard to see that $\FF_1(H)$ is uncrossable if $k$ is even, see \cite{BCHI}. 
The following lemma gives uncrossing properties of $\FF_2(H)$, assuming that $\FF_1(H)=\empt$.

\begin{lemma} \label{l:odd}
Let $k$ be odd. 
If $H$ is $(k,1)$-flex-connected then $\FF_2(H)$ can be decomposed in polynomial time into 
an uncrossable family $\FF'$ and a symmetric proper crossing family $\FF''$ such that $\FF' \cup \FF''=\FF_2(H)$. 
\end{lemma}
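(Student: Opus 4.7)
The plan is to use the decomposition from Lemma~\ref{l:DN} (together with the classical cactus representation of min-cuts in the even case) to reduce to a per-part boundary-edge classification, and then to split within each part.

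First I would handle trivial cases and set up the decomposition. Let $\lambda$ denote the edge-connectivity of $H$; since $H$ is $(k,1)$-flex-connected, $\lambda \geq k$. If $\lambda \geq k+2$ then $\FF_2(H)=\empt$ and we are done. In the remaining subcases $\lambda \in \{k, k+1\}$, every member of $\FF_2(H)$ has size $k+1$. When $\lambda=k$ (odd), apply Lemma~\ref{l:DN} directly to get a decomposition of the $(\lambda+1)$-cuts into parts whose quotient graphs are cycles of type~(a), type~(b), or a cube (type~(c), possible only if $\lambda=3$). When $\lambda=k+1$ (even), all members of $\FF_2(H)$ are minimum cuts of $H$; the classical cactus representation decomposes them analogously into ``cactus-cycle'' parts whose edges have capacity $\lambda/2$.

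Next I would argue that every crossing pair in $\FF_2(H)$ is captured inside a single type-(a) (or cactus-cycle) part, and has the case-(f) structure of Lemma~\ref{l:l+1}. By Lemma~\ref{l:l+1}, two crossing $(\lambda+1)$-cuts fall into one of cases (c)--(f), but cases (c)--(e) each force two adjacent corners to be $\lambda$-cuts. Since $\FF_1(H)=\empt$, any $\lambda$-cut $C$ satisfies $d_U(C)=0$, and the standard counting identity $d_U(A) \leq d_U(C_1)+d_U(C_2)$ (valid for any disjoint union $A=C_1 \cup C_2$) would then force $d_U(A)=0$, contradicting $d_U(A) \geq 2$. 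So only case~(f) can occur: all four corners are $(\lambda+1)$-cuts, both diagonals vanish, and the four side edges all have capacity $(\lambda+1)/2$ (respectively $\lambda/2$ in the even case).

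Inside any such cycle part, classify each cycle edge $e_i$ by the number $u_i$ of unsafe edges it contains: \emph{zero} if $u_i=0$, \emph{light} if $u_i=1$, \emph{heavy} if $u_i \geq 2$. A $(\lambda+1)$-cut of the part is an arc with two boundary edges $e_i, e_j$, and lies in $\FF_2(H)$ iff $u_i+u_j \geq 2$, ruling out only boundary-type patterns $(0,0),(0,1),(1,0)$. Place into $\FF''$ all arcs whose boundary edges are both nonzero, and place into $\FF'$ all arcs with exactly one zero boundary (the other then necessarily heavy). This covers $\FF_2(H)$ restricted to the part. The family $\FF''$ is a symmetric proper crossing family: for two crossing $A, B \in \FF''$, the four boundary edges of their square are all nonzero, so the boundary pairs of $A \cap B$ and $A \cup B$ are also nonzero and hence lie in $\FF''$, while $(A \sem B)\cup(B \sem A)$ is a $4$-cut of the cycle of weight $2(\lambda+1)$ and therefore not in $\FF_2$. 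For the uncrossability of $\FF'$ one analyzes four configurations depending on which of the two boundary edges of $A$ (and of $B$) is zero: in two of them $A \cap B, A \cup B \in \FF'$, and in the other two $A \sem B, B \sem A \in \FF'$.

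Finally, type-(b) cycle parts contribute $(\lambda+1)$-cuts that form a laminar chain (each contains exactly one endpoint of the single small-capacity edge), so placing all of them in $\FF'$ keeps it uncrossable. The type-(c) cube part, arising only when $\lambda=3$, contains only a constant number of cuts and admits an explicit split by inspection. Combining contributions over all parts yields the required $\FF', \FF''$, and the procedure runs in polynomial time since both Lemma~\ref{l:DN} and the cactus representation are polynomial-time computable. The main technical obstacle will be the four-way case analysis of uncrossability for $\FF'$ in cycle parts; the rest is either standard or vacuous.
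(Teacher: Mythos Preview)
Your proposal follows the paper's line closely: apply the decomposition of Lemma~\ref{l:DN}, work per part, and split the cuts in a type-(a) cycle according to the unsafe-edge counts $u_e$ of their two boundary bundles. The one substantive difference is the split itself. The paper lets $\FF'$ consist of all cuts in $\FF_2$ containing a \emph{heavy} ($u_e\ge 2$) boundary edge and sets $\FF''=\FF_2\setminus\FF'$, which then necessarily consists of cuts with two \emph{light} ($u_e=1$) boundaries; you instead put into $\FF''$ every cut whose two boundaries are both non-zero and into $\FF'$ only the one-zero/one-heavy cuts. Both choices give an uncrossable $\FF'$ and a symmetric proper crossing $\FF''$ via essentially the same four-configuration check on the square, so your variant is equally valid. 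Where the paper is cleaner is in the handling of parts of types (b) and (c): rather than treating them as laminar chains or finite cubes to be dumped into $\FF'$ ``by inspection,'' it observes that in those quotient graphs every edge lies in some $k$-cut of $H$ (for the cube this is immediate since it is $3$-regular, so every edge sits in the degree cut of either endpoint), hence every bundle there is ``black'' and those parts contribute nothing at all to $\FF_2(H)$. That observation also makes your separate treatment of the case $\la=k+1$ via the cactus unnecessary, since then only type-(a) parts can occur.
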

\begin{proof}
Consider a decomposition of $(k+1)$-cuts as in Lemma~\ref{l:DN}.
Note that if two $(k+1)$-cuts belong to the same part then so are their corner cuts. 
Hence to prove the lemma it is sufficient to provide a proof for each part of $\FF$ as in Lemma~\ref{l:DN},
namely, we may assume that there is just one part which is $\FF$. 

For an edge $e$ of the quotient graph of $\FF$ 
let $u_e$ be the number of unsafe edges in the edge subset of $H$ represented by $e$. 
Let us say that $e$ is {\bf red} if $u_e \geq 2$, {\bf blue} if $u_e=1$, and {\bf black} otherwise (if $u_e=0$). 
Note that since $\FF_1(H)=\empt$, there cannot be a $k$-cut in the quotient graph that contain a blue or a red edge.
Consequently, only in case (a) the quotient graph may have non black edges, as in cases (b,c) every edge of the quotient graph
belongs to some $k$-cut. 
On the other hand, every cut in $\FF_2(H)$must contain a blue or a red edge, thus the only relevant case is (a).

Let $\FF'$ be the family of cuts in $\FF_2$ that contain a red edge and let $\FF''=\FF \sem \FF'$.
Note that every cut in $\FF''$ consists of $2$ blue edges. 
We claim that $\FF'$ is uncrossable and that $\FF''$ is is a proper crossing family (clearly, $\FF''$ is symmetric).
Let $A,B \in \FF$ cross. 
If $A,B \in \FF'$ then their square has two adjacent red edges, and this implies that 
$A \cup B,A \cap B \in \FF$ or $A \sem B,B \sem A \in \FF'$. Consequently, $\FF'$ is uncrossable.
If $A,B \in \FF''$ then their square has $4$ blue edges, and then all corner cuts are in $\FF''$. Thus $\FF''$ is a crossing family.
Furthermore, the capacity of the cut defined by the set $(A \sem B) \cup (B \sem A)$ is $4(k+1)/2=2(k+1)>k+1$,
and thus $\FF''$ is a proper crossing family.
\end{proof}

We now finish the proof of part (ii) of Theorem~\ref{t:fgc}. 
We will apply Algorithm~\ref{alg:fgc}.
At step~1, $c(J) \leq 2{\sf opt}$, c.f. \cite{K}. 
In the loop (steps 2,3) we combine Lemmas \ref{l:even} and \ref{l:odd}  
with the best known approximation ratios for solving appropriate set family edge cover problems. 

If $k$ is even then each of $\FF_1,\FF_2$ is uncrossable by Lemma~\ref{l:even} and thus 
$c(J_1),c(J_2) \leq 2{\sf opt}$, by \cite{GGPS}; 
consequently, $c(J \cup J_1 \cup J_2) \leq 6{\sf opt}$. 

If $k$ is odd then $\FF_1$ is laminar and and thus $c(J_1) \leq (1.5+\eps){\sf opt}$ by \cite{TZ}.
After $\FF_1$ is covered, $\FF_2$ can be decomposed  into 
an uncrossable family $\FF'$ and a symmetric proper crossing family $\FF''$, by Lemma~\ref{l:odd}.
We can compute a $2$-approximate cover $J'$ of $\FF'$ using the algorithm of \cite{GGPS} and a
$(1.5+\eps)$-approximate cover $J''$ of $\FF''$ using the algorithm of \cite{TZ2}. 
Consequently, the approximation ratio of the algorithm is bounded by
\[
[c(J)+c(J_1)+c(J')+c(J'')]/{\sf opt}  \leq 2+(1.5+\eps)+2+(1.5+\eps) =7+2\eps \approx 7+\eps  \ ,
\]
concluding the proof of part (ii) of Theorem~\ref{t:fgc}.

% \bibliography{cap}

\end{document}